\newtheorem{definition}{Definition}
\newtheorem{theorem}{Theorem}
\newtheorem{lemma}{Lemma}
\begin{document}
	
	\title{Local versus global subtleties of projective representations}
	
	\author{J. M. Hoff da Silva} 
	\email{julio.hoff@unesp.br}
	\affiliation{Departamento de F\'isica, Universidade
		Estadual Paulista, UNESP, Av. Dr. Ariberto Pereira da Cunha, 333, Guaratinguet\'a, SP,
		Brazil.}
	
	\author{J. E. Rodrigues} 
	\email{jose.er.batista@unesp.br}
	\affiliation{Departamento de F\'isica, Universidade
		Estadual Paulista, UNESP, Av. Dr. Ariberto Pereira da Cunha, 333, Guaratinguet\'a, SP,
		Brazil.}

\begin{abstract}
In this short review, we pay attention to some subtleties in the study of projective representations, contrasting local to global properties and their interplay. The analysis is exposed rigorously, showing and demonstrating the main necessary theorems. We discuss the implementation of useful algebraic topology tools to characterize representations.
\end{abstract}		
\maketitle	
	
\section{Introduction}	
In answering the honorific invitation to contribute with a manuscript to the {\it Festschrift} for Professor Ruben Aldrovandi, it is our pleasure to start recalling one of his recommendations dating back to 2004 or so, when asked about the physical meaning of a particle. After a long and detailed explanation about the mathematical structure underlying the representation idea and (quite probably) guessing the first author's lack of necessary background, Professor Aldrovandi recommended a ``judicious reading of Bargmann's paper'' \cite{bar}. This short review presents some subtleties of projective representations whose roots can be traced back to this seminal work. 
	
In another seminal work \cite{wig0}, E. P. Wigner provided a consistent approach to representing the Poincar\`e group in the Hilbert space without even assuming the continuity of the representation but instead demonstrating it. As a result, the very concept of a particle arises firmly supported by the robustness of Wigner's mathematical approach: a particle is an irreducible representation of the Poincar\`e group, connecting once and for all the spacetime symmetries to the particles it supports. In the process, it could not be less asked from the physical point of view: all the results are obtained from the requirement that quantum physics gives the same results here and there, today and tomorrow, provided the same conditions, i.e., it only requires symmetry for quantum processes. Bargmann's work systematizes some of Wigner's steps concerning group representations to provide general results whose particular cases encompass Lorentz and Poincar\`e group representations in the Hilbert space. In particular, these two cases follow more simply (although complex) than in the Wigner approach. 

The advance of algebraic topology and the usage of some of its tools greatly simplify the process of studying whether a representation is genuine or projective. In this regard, \v{C}ech cohomology is particularly useful. When defined upon a continuous manifold, the standard formalism developed by Bargmann is straightforwardly handled with the aid of \v{C}ech cohomology elements. However, Bargmann's approach stops being helpful when this is not the case, however, and the group-associated manifold is not topologically trivial. However, elements of \v{C}ech cohomology are still helpful to study representations.          

This manuscript is organized as follows: starting, in section II, from the basic concepts underlining the mathematical theory of unitary representations of continuous groups, we move to the idea of local exponents and the demonstration of its differentiability in section III. The definition of local groups and the conditions under which local results can be claimed valid globally are presented in section IV. Section V delves into some useful tools from algebraic topology, whose generality and manageability allow interesting connections. In the section VI we conclude.    	        
	
\section{Basic Background}	
	
States $\Psi$ in Quantum Mechanics are equivalence classes, so-called rays, of vectors in a given Hilbert space. Each vector $\psi$ itself is a representative of the class it belongs to, i.e., $\psi\in\Psi$, and each representative of a given ray differs from another representative by a unimodular complex phase. For two representatives $\psi\in\Psi$ and $\phi\in\Phi$, the transition probability of a state $\Psi$ to $\Phi$ is given by $|(\psi,\phi)|^2$, and hence the inner product between rays is naturally given by $\Psi\cdot\Phi=|(\psi,\phi)|$. Two descriptions of a given quantum mechanical system are isomorphic if, and only if, there exists a one-to-one correspondence $\Psi\leftrightarrow \Phi$ between rays preserving transition probabilities: $\Psi_1\cdot\Psi_2=\Phi_1\cdot\Phi_2$. This isomorphism encompasses relevant physical situations as the description of the same quantum phenomena in two distinct inertial frames.  

In analogy to vectorial rays, it is possible to define an operator ray, say $\mathcal{U}$, as the set of all operators $\tau U$, with $|\tau|=1$, for a fixed operator $U$, also called a representative of $\mathcal{U}$. It is due to Wigner \cite{wig1} a theorem stating that an isomorphic ray correspondence defines a unitary and linear (or anti-unitary and anti-linear) operator. For topological continuous groups $G$ (the standard case for this tutorial review), operators representing group elements belonging to its identity component are necessarily unitary. Besides, the product of two operator rays $\mathcal{U}\mathcal{V}$ is defined as the ray composed by all the products $UV$ with $U\in\mathcal{U}$ and $V\in\mathcal{V}$. Let $a\in G$ be a group element, $\mathcal{U}_a$ its isomorphic ray correspondence, and $U_a$ a representative. From the usual group representation rule $\mathcal{U}_a\mathcal{U}_b=\mathcal{U}_{ab}$ one has $U_a U_b=\omega(a,b)U_{ab}$, with $|\omega(a,b)|=1$. As it can be seen, selecting new representatives given by $U'_a=\phi(a)U_a$ and $U'_b=\phi(b)U_b$ ($|\phi(a)|=1=|\phi(b)|$) we have, from  $U'_a U'_b=\omega'(a,b)U'_{ab}$ and using the linearity of the representation, $[\phi(a)\phi(b)\omega(a,b)-\phi(ab)\omega'(a,b)]U_{ab}=0$ motivating the following definition:

\begin{definition}
	Two representation factors, $\omega(a,b)$ defined in a neighborhood $\Sigma$ and $\omega'(a,b)$ defined in $\Sigma'$ are said equivalent if, in $\Sigma_0\subset\Sigma\cap\Sigma'$, they are related by 
	\begin{equation}
	\omega'(a,b)=\frac{\phi(a)\phi(b)}{\phi(a,b)}\omega(a,b).\label{1}
	\end{equation}
\end{definition}	     

For those systems whose group realization can be performed with $\omega=1$, the representation is said to be genuine, otherwise it is called projective. The theory developed by Bargmann \cite{bar} answers quite judiciously whether a given (continuous) group representation is genuine or projective. In order to see the part of this theory relevant for this short review, we shall recall the definition of continuity of a given representation:

\begin{definition}\label{d2}
	A ray representation of a group $G$ is continuous if the following condition is reached: for every element $a\in G$, every ray $\Psi$ in the Hilbert space, and every $\epsilon\in\mathbb{R}^*_+$, there exists a neighborhood $\Sigma$ of $a$ in $G$ such that the distance between $\mathcal{U}_a\Psi$ and $\mathcal{U}_b\Psi$ is less than $\epsilon$ if $b\in\Sigma$.   
\end{definition} This concept of continuity in this context is profound and brings several deep consequences in representation theory \cite{neeb}. By now, it is enough to highlight two aspects related to this definition: firstly \cite{bar} it refers to the theoretical statement that probability transitions vary continuously with the group element. Secondly, as Wigner proved in 1939 \cite{wig0}, it allows for a special selection of representatives strongly continuous (in the sense of Definition \ref{d2}), the so-called {\it admissible representatives}\footnote{The theorem stating this possibility was revisited by Bargmann \cite{bar}. See also \cite{peq} for a broad discussion about this proof.}. Let, then, $\{U_a\}$ be an admissible set of representatives of a continuous group $G$ in a given neighborhood $\Sigma$ of the identity $e\in G$, so that $U_e=1$ and $a,b$, and $ab$ are in $\Sigma$. Therefore $U_{ab}$ is well defined and belongs to the same ray as $U_aU_b$. Hence $U_aU_b=\omega(a,b)U_{ab}$ and, obviously, $\omega(e,e)=1$. Moreover, the associative law $(U_aU_b)U_c=U_a(U_bU_c)$ leads to 
\begin{equation}
\omega(a,b)\omega(ab,c)=\omega(b,c)\omega(a,bc).\label{2}
\end{equation} 

It can be shown that the continuity of admissible representatives implies the continuity of $\omega(a,b)$ factors \cite{bar,peq}. Now we shall state a few important definitions.
\begin{definition}
	  Every complex and continuous function $\omega(a,b)$ ($|\omega(a,b)=1|$) defined for the $a,b$ elements of some neighborhood $\Sigma$ is a {\it local factor} of $G$ defined in $\Sigma$ if $\omega(e,e)=1$ and Eq. (\ref{2}) is valid whenever $ab$ and $bc$ belong to $\Sigma$.  
 \end{definition}

\begin{definition}
	If $\Sigma$ coincides with $G$, such that (\ref{2}) is valid in all group, $\omega$ is then called factor of $G$.
\end{definition} 

When treating a given projective representation, the reader is probably more familiar with exponential terms. In the next section, we shall delve into this more familiar and easily handled notation (and its consequences).  

\section{Local exponents and Differentiability}

As mentioned, it is often beneficial\footnote{See a remark on such identification at the beginning of the theorem's (\ref{t1}) proof.} to write $\omega(a,b)=e^{i\xi(a,b)}$, $\xi\in\mathbb{R}$, so that the previous definitions can be restated as 
\begin{definition}\label{le}
	A local exponent of a group $G$ defined in a neighborhood $\Sigma$ is a real and continuous function $\xi(a,b)$ defined for every $a,b$ of $\Sigma$ satisfying
	\begin{itemize}
		\item $\xi(e,e)=0$, so that $\omega(e,e)=1$;
		\item $\xi(a,b)+\xi(ab,c)=\xi(b,c)+\xi(a,bc)$, $ab,bc\in\Sigma$, so that Eq. (\ref{2}) is valid. 
	\end{itemize} 
 Finally, if $\Sigma$ coincides with $G$, $\xi$ is called an exponent of $G$. 
\end{definition} 

In general, the phases of a projective representation are independent of the physical state upon which it acts \cite{wei}. Nevertheless, the proof for this statement needs it is always possible to prepare a physical state represented as the sum of two linearly independent states. This requirement cannot always be accomplished, and, for those systems, the phases can also depend on dynamical labels. We shall keep our notation without referencing this generality by now, but one should bear in mind this additional generality. Now, from Eq. (\ref{1}), taking $\omega=e^{i\xi}$ and $\phi(a)=e^{ix(a)}$, we have 
\begin{eqnarray}
\xi'(a,b)=\xi(a,b)+\Delta_{a,b}[x],\label{qua}
\end{eqnarray} where 
\begin{equation}
\Delta_{a,b}[x]:= x(a)+x(b)-x(ab).\label{se}
\end{equation} If $x(a=e)=0$ it is fairly simple to see that $\xi'(e,e)=0$. Besides, from Eq. (\ref{qua}),
\begin{equation}
\xi'(a,b)+\xi'(ab,c)=\xi(a,b)+\Delta_{a,b}[x]+\xi(ab,c)+\Delta_{ab,c}[x],
\end{equation} which, with the aid of the second item of Def. (\ref{le}), reads 
\begin{equation}
\xi'(a,b)+\xi'(ab,c)=\xi(b,c)+\xi(a,bc)+\Delta_{a,b}[x]+\Delta_{ab,c}[x].
\end{equation} Finally, noticing that $\xi(k,l)=\xi'(k,l)-\Delta_{k,l}[x]$ and $\Delta_{a,b}[x]+\Delta_{ab,c}[x]-\Delta_{b,c}[x]-\Delta_{a,bc}[x]=0$, we arrive at 
\begin{equation}
\xi'(a,b)+\xi'(ab,c)=\xi'(b,c)+\xi'(a,bc),\nonumber
\end{equation} that is, all the conditions present in Def. (\ref{le}) are filled and the following statement can be written: if $\xi$ is a local exponent defined in $\Sigma$ and $x(a)$ a continuous real function in $\Sigma^2$ (a neighborhood consisting of products $ab$) such that $x(e)=0$, then $\xi'$ defined by (\ref{qua}) is also a local exponent in $\Sigma$. All that motivates the following definition:
\begin{definition}
	Two local exponents $\xi$ and $\xi'$ defined in $\Sigma$ and $\Sigma'$, respectively, are equivalent if Eqs. (\ref{qua}) and (\ref{se}) are valid in some neighborhood $\Sigma_0\subset (\Sigma\cap\Sigma')$, where $x(a)$ is a continuous real function defined in $\Sigma^2$. 
\end{definition} We observe that the equivalence $\xi\equiv \xi'$ produced by Eq. (\ref{qua}) is a formal equivalence relation, i.e., symmetric, reflexive, and transitive.

We shall now pay some attention to an important aspect of this construction, the differentiability of local exponents for continuous groups. Differentiable means a term with continuous partial derivatives of all orders concerning the group elements or coordinates. The proof presented in \cite{bar} is an adaptation of a method for approaching some more or less known (at that time) results about Lie groups shown in Ref. \cite{iwa}. Let us revisit it in detail.  
\begin{theorem}\label{uia}
 In a Lie group every local exponent is equivalent to a differentiable local exponent.  	
\end{theorem}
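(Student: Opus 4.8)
The plan is to regularize $\xi$ by convolving it with a smooth bump against the invariant measure of the group, exploiting the cocycle identity (the second item of Definition \ref{le}) to guarantee that the regularized object lands in the same equivalence class. Since a Lie group carries a left-invariant Haar measure $d\mu$ given by a smooth volume form, I would fix a non-negative $\rho\in C^\infty(G)$ supported in a neighborhood $V$ of $e$ so small that all products occurring below remain in $\Sigma$, normalized by $\int_G \rho\, d\mu=1$. I work throughout on a neighborhood $\Sigma_0\subset\Sigma$ chosen small enough that $a,b\in\Sigma_0$ and $s\in V$ force $ab$, $bs$, $sa$ and their combinations back into $\Sigma$, so that Eq. (\ref{2}) is always legitimate; recall that equivalence is only required to hold on such a $\Sigma_0$.

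First I would smooth the second slot. Define $x(a)=\int_G \xi(a,s)\,\rho(s)\,d\mu(s)$, a continuous function. Writing the cocycle identity with $c=s$ as $\xi(ab,s)=\xi(b,s)+\xi(a,bs)-\xi(a,b)$ and integrating against $\rho$ cancels the $x(b)$ contribution and yields
\[
\xi_1(a,b):=\xi(a,b)-\Delta_{a,b}[x]=\int_G \big[\xi(a,bs)-\xi(a,s)\big]\rho(s)\,d\mu(s).
\]
Left invariance of $d\mu$ then lets me substitute $u=bs$ in the first term, driving the entire $b$-dependence into the smooth factor $\rho(b^{-1}u)$:
\[
\xi_1(a,b)=\int_G \xi(a,u)\,\big[\rho(b^{-1}u)-\rho(u)\big]\,d\mu(u).
\]
By Eq. (\ref{qua}) one has $\xi_1\equiv\xi$, and setting $a=b=e$ gives $\xi_1(e,e)=0$, so $\xi_1$ is a genuine local exponent. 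Differentiation under the integral sign (licit because $\rho$ is smooth and $\xi$ continuous) shows $\xi_1$ is $C^\infty$ in $b$, with all $b$-derivatives jointly continuous in $(u,b)$.

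A symmetric step handles the first slot. Relabeling the cocycle identity as $\xi_1(a,b)=\xi_1(s,a)+\xi_1(sa,b)-\xi_1(s,ab)$ and averaging over $s$ against a smooth bump $\sigma$ with respect to the right-invariant measure $d\nu$ produces, by the same bookkeeping (now with the $z(b)$ term cancelling), a continuous $z(a)=\int_G\xi_1(s,a)\,\sigma(s)\,d\nu(s)$ together with
\[
\xi_2(a,b):=\xi_1(a,b)-\Delta_{a,b}[z]=\int_G \xi_1(t,b)\,\big[\sigma(ta^{-1})-\sigma(t)\big]\,d\nu(t),
\]
whose $a$-dependence now lives entirely in the smooth translate $\sigma(ta^{-1})$, so $\xi_2$ is $C^\infty$ in $a$. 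Crucially, the only $b$-dependence here is through $\xi_1(t,b)$, which is already $C^\infty$ in $b$ with derivatives continuous in $(t,b)$; hence I may differentiate under the integral in \emph{both} arguments at once and conclude that $\xi_2$ is jointly differentiable of all orders on $\Sigma_0\times\Sigma_0$. Since a composition of coboundary shifts is again a coboundary shift, $\xi_2\equiv\xi_1\equiv\xi$, and $\xi_2$ is the desired differentiable local exponent.

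The genuinely delicate point is not the equivalence — that is handed to us for free by the cocycle identity, which converts each average into an \emph{exact} correction of the form $\Delta_{a,b}[\,\cdot\,]$ in Eq. (\ref{se}) — but rather the upgrade from separate to \emph{joint} smoothness, together with the locality bookkeeping. The hard part will be verifying that the first regularization produces $b$-derivatives continuous jointly in $(t,b)$, since this is exactly what licenses differentiating under the integral in the second step without sacrificing the differentiability already gained; and confirming that the supports of $\rho$, $\sigma$ and the neighborhood $\Sigma_0$ can be chosen coherently so that every product invoked by Eq. (\ref{2}) stays inside $\Sigma$. These are precisely the places where the smooth structure of the Lie group (smoothness of multiplication and of the invariant measures) and the technique adapted from Ref. \cite{iwa} are indispensable.
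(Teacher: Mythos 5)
Your proposal is correct and follows essentially the same route as the paper's proof (itself adapted from Iwasawa's method): smooth the second argument by averaging against a bump with respect to the left-invariant Haar measure, then the first argument against a bump with respect to the right-invariant measure, with the cocycle identity converting each averaging into an exact coboundary shift $\Delta_{a,b}[\,\cdot\,]$ and a change of variables pushing all group-element dependence into the translated bumps $\rho(b^{-1}u)$ and $\sigma(ta^{-1})$. The paper settles the joint-smoothness point you flag by substituting the first representation into the second, yielding a single double integral $\int_G\int_G \xi(l,k)\{g(b^{-1}k)-g(k)\}\{g'(la^{-1})-g'(l)\}\,dk\,d'l$ in which all $(a,b)$-dependence sits in the smooth factors --- which is precisely the content of your differentiation-under-the-integral argument.
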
	 
\begin{proof}
	As it is well known \cite{barut}, every Lie group $G$ supports a Haar invariant measure. Let $da$ and $d'a$ denote the left and right invariant measures in $G$, respectively. Now define two real functions, $g$ and $g'$, diffentiable in $\Sigma$, and null everywhere outside $\Sigma'\subset \Sigma$. Besides, the functional form of these functions must respect\footnote{At first sight, this set of requirements may appear too restrictive. That is not the case, however. It is possible to envisage a plethora of functions filling all requirements.} 
	\begin{equation}
	\int_G g(a)da=1=\int_G g'(a)d'a.\label{iw1}
	\end{equation} Note that the integrals are trivial in $G$, being non-null (and equal to $1$) only in $\Sigma'$. 
	
	Define two equivalence relations by 
	\begin{eqnarray}
	\xi'(a,b)=\xi(a,b)+\Delta_{a,b}[x] \nonumber, \\ x(a)=-\int_G\xi(a,k)g(k)dk \label{iw2},
	\end{eqnarray} where $a,b\in \Sigma_1$, and   \begin{eqnarray}
	\xi''(a,b)=\xi'(a,b)+\Delta_{a,b}[x'] \nonumber, \\ x'(a)=-\int_G\xi'(l,a)g'(l)d'l \label{iw3},
	\end{eqnarray} where $a,b\in \Sigma_2\subset \Sigma_1$. Now take $\Sigma'\subset \Sigma_2$ and consider Eqs. (\ref{iw2}) starting with $\xi'(a,b)=\xi(a,b)\cdot 1+\Delta_{a,b}[x]$ writing (in $\Sigma'$) $1=\int_G g(k) dk$. Using Eq. (\ref{se}) we have 
	\begin{equation}
	\xi'(a,b)=\int_G\big\{\xi(a,b)-\xi(a,k)-\xi(b,k)+\xi(ab,k)\big\}g(k)dk,
	\end{equation} and from the very definition (\ref{le}), which reproduces Eq. (\ref{2}), we are left with 
	\begin{equation}
	\xi'(a,b)=\int_G\big\{\xi(a,bk)-\xi(a,k)\big\}g(k)dk.\label{eve}
	\end{equation} In the first integral, the relabel $k\mapsto b^{-1}k$ produces
	\begin{equation}
	\int_G \xi(a,bk)g(k)dk\longrightarrow \int_G \xi(a,k)g(b^{-1}k)dk,\nonumber
	\end{equation} since $dk$ is a Haar measure. Back to Eq. (\ref{eve}) we arrive at 
	\begin{equation}
	\xi'(a,b)=\int_G \xi(a,k)\big\{g(b^{-1}k)-g(k)\}dk.\label{iw4}
	\end{equation} The above steps may be followed for Eqs. (\ref{iw3}) in a similar fashion, resulting in 
	\begin{equation}
	\xi''(a,b)=\int_G \xi'(l,b)\big\{g'(la^{-1})-g'(l)\big\}d'l \label{iw5}.
	\end{equation} Finally, inserting Eq. (\ref{iw4}) into (\ref{iw5}) leads to 
\begin{equation}
 \xi''(a,b)=\int_G \int_G \xi(l,k)\big\{g(b^{-1}k)-g(k)\big\}\big\{g'(la^{-1})-g'(l)\big\}dkd'l
\end{equation} and since $\xi''(a,b)$ depends on $a$ and $b$ only via $g$ and $g'$, differentiability of $\xi''$ is inherited from these two functions. Lastly, since $\xi\simeq \xi''$ the theorem is proved. 
\end{proof}	

A similar procedure establishes differentiability for $x$ functions, as seen in the sequel.

\begin{lemma}
	If two differentiable exponents of a Lie group are equivalents, then $\xi'=\xi+\Delta[x]$ in a suitably chosen neighborhood with $x$ differentiable. 
\end{lemma}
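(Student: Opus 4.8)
The plan is to repeat the Haar-averaging construction of Theorem~\ref{uia}, now applied to the \emph{potential} $x$ instead of to the exponent itself. By hypothesis $\xi$ and $\xi'$ are differentiable and equivalent, so by Eqs.~(\ref{qua}) and~(\ref{se}) there is a continuous real function $x$ with $\xi'(a,b)=\xi(a,b)+\Delta_{a,b}[x]$ on some neighborhood. Introducing $F(a,b):=\xi'(a,b)-\xi(a,b)$, which, being the difference of two differentiable exponents, is itself differentiable, the equivalence reads $F(a,b)=x(a)+x(b)-x(ab)$.

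First I would choose a differentiable $g$ supported in a small $\Sigma'$ with $\int_G g(b)\,db=1$ with respect to the left-invariant Haar measure, exactly as in the proof of Theorem~\ref{uia}, and integrate the last relation against $g(b)\,db$. Since $\int_G g(b)\,db=1$, this isolates $x(a)$ up to the additive constant $C:=\int_G x(b)\,g(b)\,db$ and the residual term $\int_G x(ab)\,g(b)\,db$.

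The decisive manipulation is to relabel this residual integral: setting $k=ab$ and using the left-invariance of $db$ under $b\mapsto a^{-1}k$ gives $\int_G x(ab)\,g(b)\,db=\int_G x(k)\,g(a^{-1}k)\,dk$, so that the whole $a$-dependence migrates into the smooth factor $g(a^{-1}k)$. Collecting the pieces yields
\[
x(a)=\int_G F(a,b)\,g(b)\,db\;-\;C\;+\;\int_G x(k)\,g(a^{-1}k)\,dk .
\]
I would then argue that each term on the right is differentiable: the first is the integral of the differentiable $F$ against a fixed smooth kernel, and the third is a convolution-type integral whose dependence on $a$ is carried entirely by the compactly supported smooth function $g(a^{-1}k)$, so that differentiation under the integral sign is legitimate to all orders. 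Hence the very $x$ furnished by the equivalence coincides, on the shrunken neighborhood, with a differentiable function, which proves the claim.

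The step I expect to be the main obstacle is the bookkeeping of the local domains — confining $a$ to a neighborhood small enough that $ab$ remains in the domain of $x$ for every $b$ in the support of $g$, so that all integrals are well defined — together with the rigorous justification of differentiation under the integral sign for $\int_G x(k)\,g(a^{-1}k)\,dk$; the latter rests on the smoothness of group multiplication and inversion and on the fact that $x$, though only continuous, is integrated against a compactly supported smooth kernel. The purely algebraic core, by contrast, is nothing but the Haar-relabeling identity already used in Theorem~\ref{uia}.
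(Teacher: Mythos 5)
Your proposal is correct and takes essentially the same route as the paper's proof: integrating the differentiable difference $\Delta_{a,b}[x]=\xi'(a,b)-\xi(a,b)$ against the smooth bump $g$ with respect to left Haar measure and then transferring the $a$-dependence of the residual term into the smooth kernel via $\int_G x(ab)\,g(b)\,db=\int_G x(k)\,g(a^{-1}k)\,dk$ is precisely the paper's construction of $\eta(a)=\int_G\Delta_{a,k}[x]\,g(k)\,dk$ and $\bar{x}=x-\eta$, merely rearranged so as to solve directly for $x(a)$. The domain bookkeeping and differentiation-under-the-integral points you flag are indeed the only analytic subtleties, and the paper glosses over them as well.
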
	 
\begin{proof}
Since $\xi'$ and $\xi$ are differentiable in a neighborhood, so it is $\Delta[x]$ and $\eta$ defined by $\eta(a)=\int_G\Delta_{a,k}[x]g(k)dk$, with $g$ defined as in the theorem (\ref{uia}). Now define $\bar{x}(a)=x(a)-\eta(a)$, i.e.
\begin{eqnarray}
\bar{x}=x(a)\cdot 1-\int_G\big\{x(a)+x(k)-x(ak)\big\}g(k)dk.\label{emi}
\end{eqnarray} Expressing $1=\int_G g(k)dk$ the expression above reduces to $\bar{x}=\int_G\big\{x(k)-x(ak)\big\}g(k)dk$. Now, taking $k\mapsto a^{-1}k$ in the second integral, we are left with 
\begin{eqnarray}
\bar{x}(a)=\int_G x(k)\big\{g(k)-g(a^{-1}k)\big\}dk,
\end{eqnarray} showing the dependence of $\bar{x}$ on $a$ through $g$ and evincing the differentiability of $\bar{x}$ and, as a consequence, of $x$. 
\end{proof}

We end this section calling attention to the fact that differentiability implies continuity, a fact to be appreciated in Theorem (\ref{tr}). 

\section{Considerations about the local group and extensions}

It is possible to explore further the ray characteristic of a given admissible set of representatives. Operators belonging to $\mathcal{U}_a$ are of the form $e^{i\theta}U_a$, with $\theta\in\mathbb{R}$. Therefore, the representation reads
\begin{equation}
(e^{i\theta}U_a)(e^{i\theta'}U_b)=e^{i(\theta+\theta')}\omega(a,b)U_{ab}=e^{i(\theta+\theta'+\xi(a,b))}U_{ab},
\end{equation} suggesting the following definition:
\begin{definition}
	Let $\xi$ be a local exponent of $G$ defined in $\Sigma$. The {\it local group $H$} is composed by pairs $\{\theta,a\}$, $\theta\in\mathbb{R}$ and $a\in\Sigma^2$, with composition rule given by 
	\begin{equation}
	\{\theta_1,a\}\{\theta_2,b\}=\{\theta_1+\theta_2+\xi(a,b),ab\}.\label{h}
	\end{equation}  
\end{definition} The topological space associated with $H$ is the product $\mathbb{R}\times \Sigma^2$, and its group properties are straightforwardly verified from the definition right above and Def. (\ref{le}). In particular, the identity element $\bar{e}$ of $H$ is $\bar{e}=\{0,e\}$, while the inverse reads $\{\theta, a\}^{-1}=\{-(\theta+\xi(r,r^{-1}),r^{-1})\}$. There is also a one-parameter subgroup, say $C$, belonging to the center of $G$, characterized by elements of the form $\{\theta,e\}$. It can be seen (\cite{bar}, and \cite{peq} for a comprehensive discussion) that $H/C\simeq G$. Take two local exponents, $\xi$ and $\xi'$, defined in $\Sigma$. The mapping 
\begin{eqnarray}
   &&\varphi:\left. H\rightarrow H' \right.\nonumber\\&&
   \left.\{\alpha,a\}\mapsto \varphi(\{\alpha,a\})=\{\alpha-x(a),a'=a\}\right.,\label{oia}
\end{eqnarray} where $x(a)$ is the real function appearing in the equivalence relation (\ref{se}), establish the isomorphism $H\simeq H'$ locally.  

In general, given a set of admissible representatives defined in a neighborhood $\Sigma$ of $G$, extending it for a set defined everywhere in $G$ is impossible. This observation is crucial: it happens more often than never for the phase of the projective representation to be equivalent to zero locally. However, this is different globally. Topological obstructions in the group manifold may forbid local results to be valid everywhere. We shall now revisit some important theorems to give a flavor of this point. 

\begin{theorem}\label{t1}
	Let $G$ be a connected and simply connected group, $\omega$ a group factor, and $\,$ $\mathcal{U}_a$ a continuous ray representation such that the local factor defined by an adequately chosen set of admissible representatives $\{U_a\}$ coincides with $\omega$ em some neighborhood $\Sigma$. Then exists an admissible set of representatives $\{U'_a\}$ uniquely determined for the whole group such that $U'_aU'_b=\omega(a,b)U'_{ab}$ and $U_a=U'_a$ in some neighborhood $\Sigma'\subset\Sigma$. 
\end{theorem}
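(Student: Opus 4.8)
The plan is to build the global representatives $\{U'_a\}$ out of the local ones by a monodromy (path-lifting) argument: connectedness lets me reach every group element, and simple connectedness guarantees that the construction is single-valued. As a preliminary, I would record the identification alluded to at the start of the section, namely that since $\omega$ is a phase one may locally write $\omega=e^{i\xi}$ with $\xi$ a local exponent in the sense of Def.~(\ref{le}); this is legitimate because $\Sigma$ can be shrunk until a continuous branch of the logarithm exists. By hypothesis the admissible representatives obey $U_aU_b=\omega(a,b)U_{ab}$ throughout $\Sigma$, with $U_e=1$.

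Next I would define the extension along paths. Because $G$ is connected, every $a\in G$ is joined to the identity by a continuous path $\gamma:[0,1]\to G$ with $\gamma(0)=e$ and $\gamma(1)=a$. Choosing a subneighborhood $\Sigma'\subset\Sigma$ and covering $[0,1]$ by finitely many subintervals $[t_j,t_{j+1}]$ so small that $\gamma(t_j)^{-1}\gamma(t_{j+1})\in\Sigma'$, I would patch the local admissible representatives together: the continuous ray representation $\mathcal{U}$ selects in each ray $\mathcal{U}_{\gamma(t)}$ a unique operator $U'_{\gamma(t)}$ agreeing with the local choice on overlaps, with the relative phases fixed by $\omega$. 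The cocycle identity~(\ref{2}) makes these local phase choices mutually consistent along the chain, so $t\mapsto U'_{\gamma(t)}$ is a well-defined continuous lift normalized by $U'_e=1$, and its endpoint furnishes a candidate $U'_a$.

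The crux is to show that $U'_a$ is independent of $\gamma$, and this is where I expect the main obstacle to lie, since it is precisely the step that forces the use of simple connectedness. Given two paths $\gamma_0,\gamma_1$ from $e$ to $a$, simple connectedness supplies an endpoint-fixing homotopy $\gamma_s$ between them. For each $s$ the endpoint $U'_a(\gamma_s)$ lies in the single ray $\mathcal{U}_a$, hence $U'_a(\gamma_s)=e^{i\beta(s)}U'_a(\gamma_0)$ for a continuous phase $\beta$ with $\beta(0)=0$. Subdividing the homotopy square into cells each mapping into a translate of $\Sigma'$, on which the single-valued local representatives apply, one checks cell by cell using~(\ref{2}) that no phase is accumulated across any cell; the total holonomy therefore vanishes and $\beta\equiv0$. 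Equivalently, the holonomy of the lift factors through $\pi_1(G)=0$. Were $G$ merely connected, a noncontractible loop could accumulate a nontrivial phase, obstructing single-valuedness; this is exactly the topological obstruction that the simple-connectedness hypothesis removes.

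It then remains to verify global multiplicativity, continuity, and uniqueness. For the factor relation I would concatenate a path to $a$ with an $a$-translate of a path to $b$ to obtain a path to $ab$; the cocycle identity~(\ref{2}) ensures that the phases carried by $U'_a$, $U'_b$, and $U'_{ab}$ combine to give $U'_aU'_b=\omega(a,b)U'_{ab}$, the relation propagating from $\Sigma$ to all of $G$ by the same path argument. Continuity of $\{U'_a\}$, and hence admissibility, is inherited from that of $\mathcal{U}$ and of the local lift. Finally, for uniqueness, any second admissible family $\{U''_a\}$ satisfying the same factor relation and agreeing with $U_a$ near $e$ must satisfy $U''_a=\chi(a)U'_a$ for a phase $\chi$; substituting into both factor relations forces $\chi(a)\chi(b)=\chi(ab)$, so $\chi$ is a continuous character equal to $1$ on $\Sigma'$. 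Since $\Sigma'$ generates the connected group $G$, one concludes $\chi\equiv1$, so $U''_a=U'_a$ everywhere, which establishes both the uniqueness and the agreement $U_a=U'_a$ on $\Sigma'$ asserted in the statement.
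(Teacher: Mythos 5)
Your proposal is correct in outline, but it takes a genuinely different route from the paper. The paper does not run the path-lifting argument on $G$ directly: it first passes to the central extension, taking the globally defined continuous logarithm $\xi$ of $\omega$ (which exists precisely because $G\times G$ is connected and simply connected) and forming the group $H$ of pairs $\{\theta,a\}$ with the twisted product (\ref{h}). On $H$ the local representation $V_{(\theta,a)}=e^{i\theta}U_a$ is a \emph{genuine} (non-projective) one, so the paper can simply invoke Pontryagin's extension theorem --- every element of a connected group is a finite product $a=a_1\cdots a_n$ of neighborhood elements, and for a simply connected group the product $V_{a_1}\cdots V_{a_n}$ is independent of the decomposition --- to obtain a global representation $W$ of $H$; it then checks that the center acts by scalars, $W_{(\theta,e)}=e^{i\theta}\mathbb{1}$, and reads off $U'_a=W_{(0,a)}$, with the factor relation and the local agreement $U_a=U'_a$ following from (\ref{h}). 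Your proof instead inlines the monodromy theorem: you continue the representatives along paths in $G$ itself, carrying the $\omega$-phase bookkeeping through every patching step, and you prove path-independence by subdividing the homotopy square into small cells --- which is essentially a re-derivation of the very Pontryagin result the paper cites, applied in twisted form. What each approach buys: the central-extension route isolates the topological extension problem once and for all in the setting of genuine representations (this is exactly what $H$ is designed for, eliminating your cell-by-cell phase accounting), at the price of needing the global logarithm $\xi$ and the verification that $H$ is simply connected and that its center acts correctly; your route avoids constructing $H$ altogether, works with $\omega$ directly, and makes the role of $\pi_1(G)=0$ maximally concrete. Your uniqueness argument --- writing $U''_a=\chi(a)U'_a$, deducing that $\chi$ is a continuous character trivial on $\Sigma'$, and using that a neighborhood of the identity generates a connected group --- is a clean, explicit substitute for the uniqueness the paper inherits from the uniqueness of the extension $W$; note only that your homotopy and consistency steps (``no phase is accumulated across any cell'') are sketched rather than computed, so in a full write-up you would need the recursive definition $U'_{c_1\cdots c_k}=\omega(c_1\cdots c_{k-1},c_k)^{-1}U'_{c_1\cdots c_{k-1}}U_{c_k}$ and an induction on the cell decomposition using (\ref{2}), but the structure of the argument is sound.
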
 
\begin{proof}
 Being $G$ connected and simply connected, there is a unique continuous $\xi(a,b)$ function, solution of $e^{i\xi(a,b)}=\omega(a,b)$ $\forall a,b\in G$. This is an exponent of $G$ and implies that $H$ is also a connected and simply connected group. Let $V_{(\theta,a)}=e^{i\theta}U_a$ be the representation of an element $\{\theta,a\}\in H$. The composition law in $H$ is ordinary (not projective), and hence, there exists a strongly continuous (unitary) representation $W(\theta,a)$ of $H$, which coincides with $V_{(\theta,a)}$ in some neighborhood. Here is the reason: following \cite{pt}, every element, say $a$, of a connected group may be written as the product $a=a_1a_2\cdots a_n$ of $n$ (finite) elements of a neighborhood. Therefore, $V_a=\Pi_i V_i$, $i=1,\cdots,n$. Besides, for a simply connected group it is always possible to have $V_{a_1}\cdots V_{a_n}=V_{a'_1}\cdots V_{a'_m}$ if $a_1\cdots a_n=a'_1\cdots a'_m$ \cite{pt}. Therefore, as the product $V_{a_1}\cdots V_{a_n}$ depends on the element $a$ (and not on the partition itself), from defining $W_a=V_{a_1}\cdots V_{a_n}$ for $a\in \Sigma'$ the group composition property follows straightforwardly (the uniqueness also follows from $a_1\cdots a_n=a'_1\cdots a'_m$ and continuity of $W's$ is inherited from $V's$). Returning to our notation, there is, then, a unitary, continuous representation $W_{(\theta,a)}$ of $H$ coinciding with $V_{(\theta,a)}$ in $\Sigma'$. 
 
 Now, let $\Sigma'\subset \mathbb{R}\times\Sigma^2$ be such that $|\theta|<k$ in $\Sigma'$, for a suitable\footnote{It is possible to refine the neighborhood concept for this proof. However, this approach is enough for our general purposes.} $k\in\mathbb{R}$. For every $\theta$ there exists $\beta=\theta/n$, $n\in\mathbb{N}^*$, such that $\{\theta,e\}=\{\beta,e\}^n$, or, in terms of our previous analysis, $W_{(\theta,e)}=(V_{(\beta,e)})^n=e^{i\theta}\,1$. Take $U'_a=W_{(0,e)}$ $\forall a$. By means of Eq. (\ref{h}), $\{\theta,a\}=\{\theta,e\}\{0,a\}$, that is $W_{(\theta,a)}=W_{(\theta,e)}W_{(0,a)}=e^{i\theta}U'_a$ and, as a consequence, $U_a=U'_a$ in $\Sigma'$. Finally, again with the aid of (\ref{h}) $\{0,a\}\{0,b\}=\{\xi(a,b),ab\}$, i.e., $U'_a U'_b=e^{i\xi(a,b)}U'_{ab}=\omega(a,b)U'_{ab}$.
\end{proof} 

Since we are interested in investigating the global validity of local properties, it is only natural to define the next concept.
\begin{definition}
	An exponent $\xi_1$ of $G$ is called an extension of $\xi$ if they are equal in some neighborhood.
\end{definition} The following result asserts that the extensions of local equivalent exponents are also equivalent. 

\begin{theorem}\label{tr}
Let $\xi$ and $\xi'$ be two equivalent local exponents of a connected and simply connected group $G$, such that $\xi'=\xi+\Delta[x]$ in some neighborhood, and assume that the exponents $\xi_1$ and $\xi'_1$ of $G$ are extensions of $\xi$ and $\xi'$, respectively. Then, $\xi'_1(a,b)=\xi_1(a,b)+\Delta_{a,b}[x_1]$ for every $a,b \in G$, where $x_1(a)$ is continuous in $a$ and $x_1=x$ in some neighborhood $\Sigma'$.    
\end{theorem}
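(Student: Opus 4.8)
The plan is to build the global function $x_1$ by transporting the local equivalence data around the group, exploiting that both $\xi_1$ and $\xi_1'$ are genuine (global) exponents while the relation $\xi' = \xi + \Delta[x]$ holds only locally. Concretely, I would first observe that since $\xi_1$ extends $\xi$ and $\xi_1'$ extends $\xi'$, the difference $\eta := \xi_1' - \xi_1$ is itself a global exponent of $G$ (the cocycle condition Eq.~(\ref{2}), in additive form the second item of Def.~(\ref{le}), is linear in $\xi$, so the difference of two exponents satisfies it with $\eta(e,e)=0$). In the distinguished neighborhood $\Sigma'$ where the extensions agree with the local exponents, we have $\eta(a,b) = \xi'(a,b) - \xi(a,b) = \Delta_{a,b}[x]$. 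So $\eta$ is a global exponent that is \emph{locally} a coboundary $\Delta[x]$, and the entire content of the theorem is the claim that this forces $\eta$ to be a \emph{global} coboundary $\Delta[x_1]$ with $x_1$ agreeing with $x$ near the identity.

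The key engine is Theorem~(\ref{t1}) together with the local-group construction of Sec.~IV. I would form the local group $H_\eta$ associated with the exponent $\eta$, and note that because $\eta$ restricted to $\Sigma'$ equals $\Delta[x]$, the exponent $\eta$ is \emph{locally equivalent to the trivial exponent} via the map $\varphi(\{\alpha,a\}) = \{\alpha - x(a), a\}$ of Eq.~(\ref{oia}). Since $G$ is connected and simply connected, Theorem~(\ref{t1}) applies: the trivial local factor $\omega \equiv 1$ admits a unique global admissible set, and the local equivalence between $\eta$ and $0$ extends. More usefully, I would run the extension argument of Theorem~(\ref{t1}) directly on the pair $(\eta, x)$: write any $a \in G$ as a product $a = a_1 \cdots a_n$ of neighborhood elements (possible by connectedness, following \cite{pt}), define $x_1(a)$ by accumulating the local increments of $x$ along this factorization, and invoke simple connectedness to show the result is independent of the chosen partition — exactly as the well-definedness of $W_a$ was argued there. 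This produces a single-valued continuous $x_1$ on all of $G$ with $x_1 = x$ on $\Sigma'$.

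It then remains to verify $\xi_1'(a,b) = \xi_1(a,b) + \Delta_{a,b}[x_1]$ globally, i.e. $\eta(a,b) = \Delta_{a,b}[x_1]$ for all $a,b$. Both sides are continuous global functions that agree on the neighborhood $\Sigma'$; since $\eta$ and $\Delta[x_1]$ are each exponents of $G$ (the latter because $x_1$ is globally defined, so $\Delta[x_1]$ is a genuine coboundary and hence satisfies the cocycle identity automatically), their difference is an exponent vanishing on a neighborhood of the identity. On a connected, simply connected group an exponent that is trivial near the identity is trivial everywhere — this is the uniqueness clause of Theorem~(\ref{t1}) applied to $\omega \equiv 1$, which forbids two distinct global data agreeing locally. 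Hence $\eta = \Delta[x_1]$ throughout $G$, giving the claim.

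\emph{The main obstacle} I anticipate is the well-definedness of $x_1$: showing that the accumulated-increment definition does not depend on the factorization $a = a_1\cdots a_n$. This is precisely where simple connectedness is indispensable and where the argument is not merely routine — one must check that any two factorizations of the same element are related by moves that leave the accumulated value of $x$ unchanged, which is the additive (abelian) shadow of the relation $V_{a_1}\cdots V_{a_n} = V_{a_1'}\cdots V_{a_m'}$ established via \cite{pt} in the proof of Theorem~(\ref{t1}). Once that consistency is secured, the matching $x_1 = x$ on $\Sigma'$ and the continuity of $x_1$ follow immediately, and the global cocycle identity is a clean propagation-from-local argument.
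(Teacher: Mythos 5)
Your overall strategy --- pass to the difference exponent $\eta=\xi'_1-\xi_1$, which is a global exponent locally equal to the coboundary $\Delta[x]$, then globalize by connectedness and simple connectedness via the factorization machinery of \cite{pt} --- is essentially the same engine as the paper's proof, which extends the local isomorphism $\varphi:H_1\rightarrow H'_1$ of Eq.~(\ref{oia}) to a global isomorphism $\varphi^*$ and reads off $x_1$ from $\varphi^*(0,a)=\{-x_1(a),a\}$. But your closing step contains a genuine error. You claim that ``on a connected, simply connected group an exponent that is trivial near the identity is trivial everywhere,'' attributing this to the uniqueness clause of Theorem~(\ref{t1}). This is false. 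Take $G=\mathbb{R}$ and any continuous $y:\mathbb{R}\rightarrow\mathbb{R}$ vanishing on $[-1,1]$, say a bump with $y(10)=1$ and $y(20)=0$; then $\Delta_{a,b}[y]=y(a)+y(b)-y(a+b)$ is a perfectly good exponent of $G$ (every coboundary satisfies the cocycle identity of Def.~\ref{le}), it vanishes whenever $|a|,|b|<1/2$, yet $\Delta_{10,10}[y]=2\neq 0$. The uniqueness in Theorem~(\ref{t1}) concerns one-variable, multiplicatively propagating data (the representatives $W$, i.e., homomorphism-like objects), for which local agreement propagates along factorizations $a=a_1\cdots a_n$; a two-variable cocycle has no such propagation property, so agreement of $\eta$ and $\Delta[x_1]$ on $\Sigma'$ does not by itself imply agreement on $G$.

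The defect is repairable inside your own construction, and the repair is exactly what the paper does: do not build $x_1$ first and then try to verify $\eta=\Delta[x_1]$ by a local-to-global rigidity claim; instead, obtain that identity as the homomorphism property of the extended object. Concretely, since $\eta=\Delta[x]$ on $\Sigma'$, the map $\sigma(a)=\{-x(a),a\}$ is a local homomorphism of $G$ into the local group $H_\eta$ built on $\eta$ via Eq.~(\ref{h}), because $\sigma(a)\sigma(b)=\{-x(a)-x(b)+\eta(a,b),ab\}=\{-x(ab),ab\}=\sigma(ab)$. Pontryagin's monodromy theorem --- the very result the paper invokes from \cite{pt} --- extends $\sigma$ to a global continuous homomorphism $\sigma^*$; composing with the projection $H_\eta\rightarrow G$ gives a continuous homomorphism $G\rightarrow G$ equal to the identity near $e$, hence the identity on all of the connected $G$ (this is the step the paper handles by fixing $g(a)=a$ on $\Sigma'$ and extending). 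Thus $\sigma^*(a)=\{-x_1(a),a\}$ for a continuous $x_1$ with $x_1=x$ on $\Sigma'$, and the relation $\sigma^*(a)\sigma^*(b)=\sigma^*(ab)$ \emph{is} the statement $\eta(a,b)=\Delta_{a,b}[x_1]$ for all $a,b\in G$ --- no separate propagation step is needed. Your ``accumulated increments'' definition of $x_1$ is this construction unwound, but then the global coboundary identity must be proved by induction on the factorization length using the cocycle identity for $\eta$, not deduced from the false rigidity claim; as written, your proof has a hole precisely at the point where the theorem's content lives.
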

\begin{proof}
It is clear that both exponents $\xi'_1$ and $\xi_1$ define two connected and simply connected local groups $H'_1$ and $H_1$, respectively, for which the composition rule is given by (\ref{h}). As mentioned, the mapping $\varphi:H_1\rightarrow H'_1$ given by (\ref{oia}) establishes a local isomorphism. In the Ref. \cite{pt}, there is an important result: the mapping $\varphi$ may be uniquely extended to an isomorphism $\varphi^*$ of $H_1$ and $H'_1$ as a whole (and not only locally) if $H_1$ is locally connected and simply connected\footnote{Intuitively, for a connected and simply connected group, the local isomorphism may be settled everywhere around the identity element, without any distinction. This is certainly not the case if topological obstructions are in order in the group manifold.}, with $\varphi^*=\varphi$ in $\Sigma'\subset \Sigma$. 

Of course  $\varphi^*(\theta,e)=\{\theta,e\}$ $\forall \theta$. Now take $\varphi^*(0,a)=\{-x_1(a),g(a)\}$ where $g(a)$ is to be found. From $\{\theta,a\}=\{\theta,e\}\{0,a\}$, we have $\varphi^*(\theta,a)=\{\theta-x_1(a)+\xi_1'(e,a),g(a)\}$. However, from the general rule $U_a U_b=e^{i\xi(a,b)}U_{ab}$ it is straightforward to see that if, for instance, $a=e$ then $1U_b=e^{i\xi'_1(e,b)}U_{eb=b}$, forcing the conclusion that\footnote{Just as all local exponent: $\xi'_1(e,b)=0=\xi'_1(a,e)=\xi'_1(e,e)$.} $\xi'_1(e,b)=0$. Therefore $\varphi^*(\theta,a)=\{\theta-x_1(a),g(a)\}$. We can now particularize the neighborhood to $\Sigma'$ to fix $g(a)=a$ and, then, extend it to all element $a$. This leads to  
\begin{equation}
\varphi^*(0,a)\varphi^*(0,b)=\{-x_1(a),a\}\{-x_1(b),b\}=\{-x_1(a)-x_1(b)+\xi'_1(a,b),ab\},\nonumber
\end{equation} and, on the other hand, $\varphi^*(0,ab)=\{-x_1(ab),ab\}$, from which\footnote{Along with the fact that $\varphi^*(0,a)\varphi^*(0,b)=e^{i\xi_1(a,b)}\varphi^*(0,ab)$.} we arrive at $\xi'_1(a,b)=\xi_1(a,b)+x_1(a)+x_1(b)-x_1(a,b)$ or, for short, $\xi'_1(a,b)=\xi_1(a,b)+\Delta_{a,b}(x_1)$ on the group as a whole. Finally, being $x_1$ continuous, $x_1=x$ in $\Sigma'$ so as to Eq. (\ref{oia}) dictates.    
\end{proof}

These two theorems are somewhat enough to appreciate the relevance of group topology in the decision between a genuine or projective representation. More often than never, it is possible to show the equivalence between a local exponent to zero in a given neighborhood. Nevertheless, the maintenance of a vanishing local exponent for the whole group requires the group to be connected and simply connected. One must look at the covering group when this is not the case. There is an isomorphism between a given group $G$ and $\tilde{G}/K$, where $\tilde{G}$ is the covering group of $G$ and $K$ is a discrete central invariant subgroup of $\tilde{G}$. If, and only if, $G$ is simply connected, this isomorphism reduces to $G\simeq\tilde{G}$. When $G$ is not simply connected, there is a record, so to speak, of projective representation encoded in $K$. That is indeed the case, for instance, for the Lorentz and Poincar\`e groups in more than two dimensions which are doubly (not simply) connected and representations up to a sign are in order.

As can be seen, the standard approach to the study of projective representations involves group theory, topology, and analysis. While precise and robust, it is certainly demanding. In the next section, we shall glance at some help from algebraic topology.      

\section{A glance at characterizations coming from algebraic topology}
We start introducing the concept of a $n-$cochain following the exposition of Ref. \cite{kiri}:
\begin{definition}
	Let G be a group and M an abelian group, a $(n+1)-$dimensional cochain is a function $\xi(g_0, \cdots, g_n)$ defined on $G \times G \times \cdots \times G$ with values in $M$.
\end{definition}
The set of all $(n+1)-$cochains forms a group $C^{n+1}(G,M)$ and it is possible to define an operator $\delta$ increasing the degree of a given cochain by \cite{ald}
\begin{equation}
\delta: C^{n+1}(G,M) \rightarrow C^{n+2}(G,M)
\end{equation}
\begin{equation}
\delta\xi(g_0, \cdots, g_{n+1}) = \sum_{i=0}^{n+1}(-1)^i\xi(g_0, \cdots, \hat{g_i}, g_ig_{i+1}, \cdots, g_{n+1}),
\end{equation}
where $\hat{g_i}$ stands for suppression of the corresponding (under hat) element. There is still an element of indeterminacy in the expression for $\delta\xi$ concerning the suppression of the last argument factor, but the examples we shall depict will resolve this specific point. The standard nomenclature calls a cochain $\xi'$ by a coboundary of the cochain $\xi$ if $\xi'=\delta\xi$. If $\delta\xi' = 0$, $\xi'$ is called a cocycle. The $\delta$ operator satisfies the fundamental property $\delta \circ \delta = 0$. To verify this property, we shall first see how to operate in simple cases for clarity. For degrees $0$, $1$, and $2$ we have, respectively
\begin{equation}\label{vai}
\begin{split}
&\delta\xi(a,b) = \xi(ab) - \xi(a), \\
&\delta\xi(a,b,c) = \xi(ab,c) - \xi(a,bc) + \xi(a,b), \\
&\delta\xi(a,b,c,d) = \xi(ab,c,d) - \xi(a,bc,d) + \xi(a,b,cd) - \xi(a,b,c).      
\end{split}
\end{equation} Now it is straightforward to see that $\delta(\delta\xi(a,b,c)) =  \delta\xi(ab,c) - \delta\xi(a,bc) + \delta\xi(a,b)$, amounts out to
\begin{equation}\label{jo1}
\delta(\delta\xi(a,b,c)) = \xi(abc) - \xi(ab) - \xi(abc) + \xi(a) + \xi(ab) - \xi(a)
\end{equation} and, therefore, $\delta(\delta\xi(a,b,c)) = 0$. Note that the first term of (\ref{jo1}) has been canceled by the first term arising from the application of $\delta$ in $\xi(a,bc)$, and the other terms follow the very same rule. Bearing in mind this simple prescription, it is possible to see that  
\begin{equation}
\begin{split}
\delta\xi(g_0, \cdots, g_{n+1}) &= \xi(\hat{g_0},g_0g_1,\cdots, g_{n+1}) - \xi(g_0,\hat{g_1}, g_1g_2, \cdots, g_{n+1}) \\ + &\cdots + (-1)^{n+1}\xi(g_0, g_1, \cdots, g_n, \hat{g}_{n+1}),   
\end{split}
\end{equation} leading to 
\begin{equation}
\begin{split}
\delta(\delta\xi(g_0, \cdots, g_{n+1})) &= \delta\xi(\hat{g_0},g_0g_1,\cdots, g_{n+1}) - \delta\xi(g_0,\hat{g_1}, g_1g_2, \cdots, g_{n+1}) \\ + &\cdots + (-1)^{n+1}\delta\xi(g_0, g_1, \cdots, g_n, \hat{g}_{n+1})=0.
\end{split}
\end{equation}

The set of all $n-$dimensional cocycles and the set of all coboundaries of $(n+1)-$dimensional cochains form each one a group, $Z^n(G,M)$ and $B^n(G,M)$, respectively \cite{naka}. This fact motivates the following definition.
\begin{definition} 
	The $n-$th \v{C}ech cohomology group, $\check{H}^n(G,M)$, is defined by
\begin{equation}
\check{H}^n(G,M)=Z^n(G,M)/B^n(G,M).\nonumber
\end{equation}
\end{definition} As it is well known, \v{C}ech cohomology groups are closely related to the De Rham cohomology groups. This relation gains the status of an isomorphism when the base group manifold is a differential manifold \cite{iso}, leading to an interesting parallel. In the De Rham cohomology, dealing with exact and closed forms, Poincar\`e lemma states that every closed form is also exact in a star-shaped domain (i.e., locally). Nontrivial topology engenders global obstructions to this result. Therefore, due to the alluded isomorphism between both cohomologies, this is also the case for \v{C}ech cohomology. This fact and the results previously revised in this manuscript help us understand the usefulness of cohomology groups regarding local versus global properties of phase representations. There is, however, an additional bonus: as we have seen, if the group to be represented is not continuous, the theory presented by Bargmann is of little help. However, cochains may still be defined. We shall return to the main line of this manuscript, making explicit the connection between the elements just described and Bargmann's approach.     

Consider a ray operatorial representation $U_a U_b = e^{i\xi(a,b)}U_{ab}$. Associativity $(U_aU_b)U_c= U_a(U_bU_c)$ implies  
\begin{equation}
\xi(a,b) + \xi(ab,c) = \xi(b,c) + \xi(a,bc),
\end{equation} just as Definition (\ref{le}) stays. From Eqs. (\ref{vai}), however, we already know that $\delta\xi(a,b,c) = \xi(ab,c) - \xi(a,bc) + \xi(a,b)$ and, therefore, one is forced to conclude that $\xi(b,c) = \delta\xi(a,b,c)$, or generically  
\begin{equation}
\xi(a,b) = \delta\xi(c,a,b).
\end{equation} Thus, the associativity of operators in a possibly projective representation induces a phase factor, which is nothing but a coboundary of a $1-$cochain. It shows that we can remove the phase and arrive at a genuine representation if, and only if, $\xi(c,a,b)$ is a coboundary. If this is not the case, i.e., $\xi(a,b)$ is a coboundary which is not a cocycle, then $\check{H}^2(G,M)$ is nontrivial. There is, then, a topological obstruction to the phase elimination, resulting in a projective representation.

We shall finalize pointing out that the survival of a phase in the representation scheme may be accompanied by physical significance. Schwinger showed \cite{sch} that the (Weyl) operators realizing Heisenberg algebra do form a complete basis for all unitary operators, describing symmetric operations in the sense of Wigner \cite{wig0}, evincing quantum kinematics, and giving a complete set of physical degrees of freedom in Quantum Mechanics. The Schwinger operators span a $Z_N\times Z_N$ (where $Z_N$ is the cyclic group) discrete representation of Weyl's realization. In Ref. \cite{ald}, it was shown that in the realm of Schwinger's representation, a fundamental cocycle exists responsible for 1) making the representation globally projective and 2) assigning a pre-symplectic structure in the space state. Within this scope, classical symplectic structure would result from a continuum limit of quantum pre-symplectic structure.  

\section{concluding remarks}

It is possible to assert that the study of group representations in the Hilbert space performs the core of the physical labeling of particles degrees of freedom and furnishes the foundations of investigating quantum kinematics and dynamics. More than a solid mathematical setup to decide whether a representation is genuine, it allows the appreciation of some profound physical results. Apart from the classical and quantum interplay mentioned in the last section, we want to remark on the possibility of frame gauge freedom in Quantum Field Theory as a natural consequence of a (generalized) projective representation Ref. \cite{inp}. The analysis performed in Ref. \cite{inp} may also be applied to nonrelativistic theories where projective representation depends on time. We shall delve into this generalization to the pre-symplectic structure in Ref. \cite{ald} in the near future.     

\subsection*{Acknowledgements}
JMHS thanks to National Council for Scientific and Technological Development -- CNPq (grant No. 307641/2022-8).

\end{document}